\newtheorem{lem}{Lemma}
\newtheorem{defi}{Definition}
\newtheorem{rem}{Remark}
\newtheorem{thm}{Theorem}
\theoremstyle{nonumberplain}
\newtheorem{proof}{Proof}
\newcommand{\vx}{{\bf x}}
\newcommand{\va}{{\bf a}}
\newcommand{\ve}{{\bf e}}
\newcommand{\myexp}[1]{\exp\left(#1\right)}
\begin{document}

\title{Linear Convergence of An Iterative Phase Retrieval Algorithm with Data Reuse}

\author{Gen~Li, Yuchen~Jiao, and Yuantao~Gu%
\thanks{The authors are with the Department of Electronic Engineering, Tsinghua University, Beijing, China. The corresponding author of this work is Y.~Gu (Email: gyt@tsinghua.edu.cn). }}

\date{submitted December 5, 2017}

\maketitle

\begin{abstract}
Phase retrieval has been an attractive but difficult problem rising from physical science,
and there has been a gap between state-of-the-art theoretical convergence analyses and the corresponding efficient retrieval methods.
Firstly, these analyses all assume that the sensing vectors and the iterative updates are independent, 
which only fits the ideal model with infinite measurements but not the reality, where data are limited and have to be reused.
Secondly, the empirical results of some efficient methods, such as the randomized Kaczmarz method, 
show linear convergence, which is beyond existing theoretical explanations considering its randomness and reuse of data.
In this work, we study for the first time, without the independence assumption, the convergence behavior of the randomized Kaczmarz method for phase retrieval.
Specifically, beginning from taking expectation of the squared estimation error with respect to the index of measurement by fixing the sensing vector and the error in the previous step,
we discard the independence assumption, rigorously derive the upper and lower bounds of the reduction of the mean squared error, 
and prove the linear convergence.
This work fills the gap between a fast converging algorithm and its theoretical understanding.
The proposed methodology may contribute to the study of other iterative algorithms for phase retrieval and other problems in the broad area of signal processing and machine learning.

{\bf Keywords:} Phase retrieval algorithm, performance analysis, independence assumption, linear convergence rate, data reuse
\end{abstract}

\section{Introduction}


Phase retrieval is to recover a vector from some magnitude measurements,
which is equivalent to solving a system of the following quadratic equations,
\begin{align}\label{yr2}
y_r = \left|\langle \va_r, \vx^* \rangle\right|, \quad r = 1, 2, \ldots, m,
\end{align}
where $\vx^* \in \mathbb{C}^n$ is an unknown signal to be recovered,
$\va_r \in \mathbb{C}^n$, $y_r$, and $m$ denote the known $r$th sampling vector, the $r$th measurement,
and the total number of measurements, respectively.
In most previous works it is assumed that the sampling vectors are independent random variables following the distribution
$\mathcal{N}\left(0, \frac{1}{2}\mathbf{I}\right)+{\rm i}\mathcal{N}\left(0, \frac{1}{2}\mathbf{I}\right)$.
Apparently, $\vx^*{\rm e}^{{\rm i}\theta}$ is also a solution for any $\theta \in [0, 2\pi)$,
so the uniqueness of the solution to the phase retrieval problem can only be defined up to a global phase.
It has been shown that a unique solution can be determined if $m \ge (4n-4)$ \cite{balan2006signal, conca2015algebraic}.
For a real vector $\vx^* \in \mathbb{R}^n$, if $\va_r \in \mathbb{R}^n$ and $\va_r \sim \mathcal{N}(0, \mathbf{I})$ independently,
then $2n - 1$ measurements are sufficient.

The phase retrieval problem has appeared frequently in science and engineering, such as X-ray crystallography \cite{harrison1993phase},
microscopy \cite{miao2008extending}, astronomy \cite{fienup1987phase}, diffraction and array imaging \cite{bunk2007diffractive},
and optics \cite{walther1963question}. Other fields of application include acoustics, blind channel estimation in wireless communications,
interferometry, quantum mechanics, and quantum information \cite{shechtman2015phase}.
Focusing on the literature on these physical science fields, one can find the phase retrieval problem commonly encountered,
because most sensors in such areas can only record the intensity of some fields without the phase information.
Due to its wide applications, many algorithms have been proposed, and we will review them briefly in the following.

The classical algorithms for phase retrieval are the error reduction algorithm and its generalizations \cite{gerchberg1972practical, fienup1982phase}.
These algorithms alternate between the estimates of the missing phase and the unknown signals iteratively.
As suggested by their names, these algorithms satisfy the residual reduction property, and are often empirically shown to be effective,
but they lack rigorous theoretical performance guarantees.

Another popular method, PhaseLift, approaches the problem through reconstructing a rank-one matrix,
from which the unknown signal can be obtained \cite{candes2013phaselift, candes2014solving, waldspurger2015phase}.
The reconstruction can be solved using tractable semi-definite programming (SDP)-based convex relaxations.
PhaseLift is known to be able to provide exact solutions (up to a global phase) using
a near minimal number of sampling vectors \cite{balan2006signal}.
However, the computational complexity and memory requirement for
SDP-based algorithms become prohibitive as the dimension of the signal increases.

Recently, many iterative methods have arisen including the alternating minimization method \cite{netrapalli2013phase},
phase retrieval via Kaczmarz method \cite{Wei2015Solving, li2015phase}, and the Wirtinger Flow algorithm and its variants \cite{candes2015phase, chen2015solving}, which directly attack the phase retrieval problem in its original non-convex formulation.
In the random online setting, these iterative methods can achieve linear rate of convergence to a solution.
Moreover, \cite{li2015phase} establishes an exact analysis of the dynamics of the 
Kaczmarz method for phase retrieval in the large systems limit.

\subsection{Motivation}

In current theoretical works on iterative algorithms for phase retrieval,
the independence assumption has always been adopted to make the analysis mathematically easier.

\begin{defi}[Independence Assumption]
In an iterative algorithm for solving phase retrieval problem \eqref{yr2},
denote $\vx_{t-1}$ as the temporary estimate of $\vx^*$ before the $t$th iteration,
and $\va_t$ as the sensing vector used to measure $\vx^*$ and update the estimation in the $t$th iteration.
It is assumed that $\vx_{t-1}$ and $\va_t$ are independent.
\end{defi}

\begin{figure*}[t]
\begin{center}
\includegraphics[width=\textwidth]{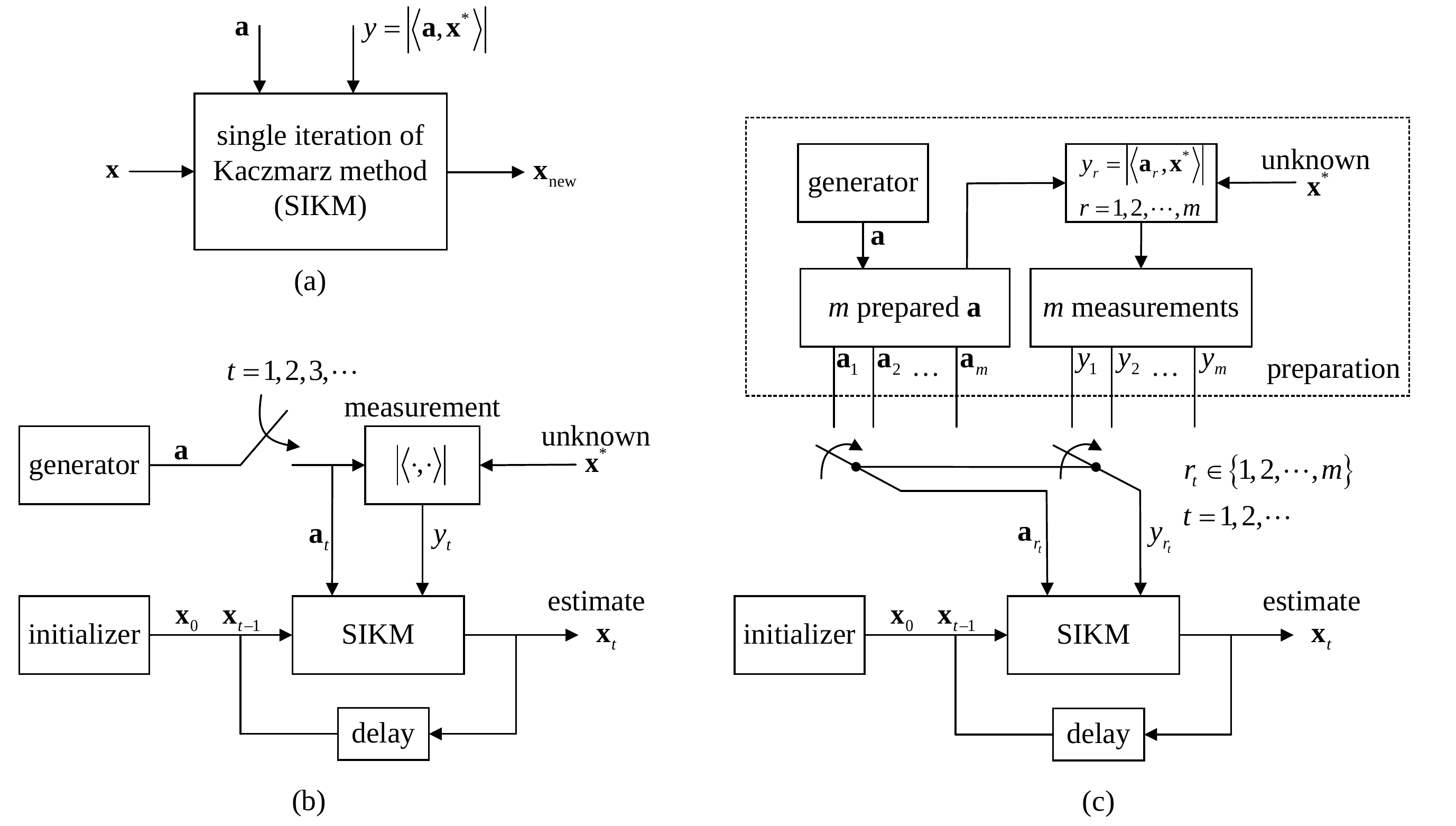}
\caption{\label{system}Visualization of phase retrieval via the randomized Kaczmarz method with both finite and infinite measurements.
(a) We introduce a functional block to denote the single iteration of Kaczmarz method (SIKM).
(b) The ideal case of infinite measurements or \emph{online processing},
where a new sensing vector is generated in each iteration and used to measure the unknown $\vx^*$.
Notice that $\va_t$ is independent with the temporary estimate $\vx_{t-1}$.
(c) The real case with finite measurements.
Before processing, $m$ sensing vectors are generated and used to produce $m$ measurements.
While in the $t$th iteration, a pair of sensing vector and measurement, denoted by $(\va_{r_t}, y_{r_t})$ are randomly chosen from the pool and sent to SIKM.
Notice that $\va_{r_t}$ is dependent with $\vx_{t-1}$ because the former may have already been used to produce the latter in previous iterations.
}
\end{center}
\end{figure*}

The independence assumption holds in the ideal case, where $m$ approaches infinity, or the scenario of \emph{online processing},
where the sensing vector is randomly generated for every measurement obtained for real-time processing.
However, in practical applications, it is not true.
The sensing vectors and the corresponding measurements have to be repeatedly used,
in that in order to reach a high precision estimate, the number of iterations of the algorithm is usually larger than
the number of measurements $m$.
In this case, the independence assumption does not hold,
for the reason that if $\va_r$ and $y_r$ have contributed to the estimate $\vx_{t-1}$ and are picked up again in the $t$th iteration,
then $\vx_{t-1}$ and $\va_r$ are dependent.
Please refer to Fig.~\ref{system}(b) and (c) for a visualization of the ideal online case and the real case.

Whether $\va_r$ and $\vx_{t-1}$ are independent leads to a visible difference in the analytical convergence behavior of the algorithm with finite and infinite measurements.
This will be visualized in Fig.~\ref{sim} in next section.
However, because it can alleviate difficulty in the analysis \cite{candes2015phase, chen2015solving, li2015phase, tan2017phase, jeong2017convergence},
the independence assumption is still widely adopted in the available works.
Such studies ignore the difference in the convergence performance displayed in Fig.~\ref{sim},
so we find their analyses not convincing enough.
Since the case with finite measurements is more common in reality, where the number of data is always limited,
more convincing theoretical analysis in the finite observation setting, without the independence assumption, 
is of importance in both theory and application,
which motivates our study.

We choose to analyze the randomized Kaczmarz method in phase retrieval.
According to the simulation results, this algorithm has good performance in both convergence rate and computational cost.
As a general row-action method,
its computational complexity is only $\mathcal{O}(n)$ per iteration
\cite{kaczmarz1937angenaherte, strohmer2006randomized}.

\subsection{Related Works}

\subsubsection{Phase retrieval using iterative projections: Dynamics in the large systems limit \cite{li2015phase}}

In our earlier work, we established an exact analysis of the dynamics of the algorithm in the large systems limit in the online setting.
The conclusion in that work will be introduced in detail in the next section.

\subsubsection{Phase retrieval via Wirtinger Flow: theory and algorithms \cite{candes2015phase}}

Phase Retrieval via Wirtinger Flow is shown to be able to achieve linear rate of convergence to a solution under the independence assumption between the iterative variable and the sampling vectors, i.e., Lemma 7.1, 7.2, 7.3 and so on.
When $m \ge cn\log n$, with high probability, the distance between the estimate of the algorithm at the $k$th iteration and the true signal $\vx^*$ decays exponentially (see Theorem 3.3).

\subsubsection{Solving random quadratic systems of equations is nearly as easy as solving linear systems \cite{chen2015solving}}

This paper modified Phase Retrieval via Wirtinger Flow by dropping terms bearing too much influence on the search direction to improve practical performance.
With similar argument as \cite{candes2015phase}, the authors prove that when $m \ge cn$, with high probability, the distance between the estimate of the algorithm at the $k$th iteration and the true signal $\vx^*$ decays exponentially (see Theorem 1).

\subsubsection{Phase Retrieval via randomized Kaczmarz: theoretical guarantees \cite{tan2017phase} and Convergence of the randomized Kaczmarz method for Phase Retrieval \cite{jeong2017convergence}}

These two papers provide theoretical bound for Phase Retrieval via randomized Kaczmarz method (see Theorem 1.2 \cite{tan2017phase} and Theorem 1.1 \cite{jeong2017convergence}) with the independence assumption between the iterative variable and the sampling vectors, i.e., Lemma 2.2 \cite{tan2017phase} and Section 2.1 \cite{jeong2017convergence}.

\subsection{Main Contribution}

In this paper, we study the convergence of the randomized Kaczmarz method for phase retrieval.
This is the first theoretical analysis on a phase retrieval algorithm without the independence assumption.
At first glance, an observation $y_r$ may contribute less innovation to the updated iterate, when it has been used for many times in the iterations. 
To some surprise, however, we successfully prove that the linear rate of convergence to a solution still holds in the finite measurements setting
with repetitively used observations.
Our approach begins from taking expectation to the squared estimation error with respect to \emph{the index of measurement} instead of the sensing vector, by \emph{fixing the sensing vector and the estimation error in the previous step} rather than taking an expectation.
This technique enables us to derive the convergence bounds no matter what dependence the temporary update and the sensing vector have, even if the former is a function of the latter.
As a consequence, the inappropriate independence assumption is successfully discarded for the first time,
which makes this work distinguish from all the previous theoretical results on this topic.
We believe that it will encourage more theoretical studies on the finite measurements setting for phase retrieval, and even other problems,
for that the analytical methodology we proposed may be adopted to analyze problems such as low-rank matrix recovery, adaptive filtering, and independent component analysis.

\subsection{Organization}

In Sections II, we review the randomized Kaczmarz method and demonstrate the gap between experimental results with data reuse and the theoretical prediction based on infinite measurements.
In Section III, we present our main contribution that proves the exponential convergence of the randomized Kaczmarz method for the first time without independence assumption. 
We derive the upper bound and the lower bound of the mean squared estimation error, and verify that both of them are of linear convergence rate.
Section IV concludes this work.
Section V collects the proofs of Lemmas and some probability inequalities.

\section{Preliminary}

In this work, we focus on analyzing phase retrieval in the real case,
where $\vx^* \in \mathbb{R}^n$, $\va_r \in \mathbb{R}^n$,
and $\va_r \sim \mathcal{N}(0, \mathbf{I})$ are independent for $r = 1,2,\cdots,m$.

\subsection{Phase Retrieval via randomized Kaczmarz method}

\begin{algorithm}[!t]
\caption{ Phase Retrieval via randomized Kaczmarz method for real case.\label{algorithm-1}}
\label{alg:Framwork}
\renewcommand{\algorithmicrequire}{\textbf{Input:}}
\begin{algorithmic}[1]

\REQUIRE ~~
     $\{(\va_r, y_r), r = 1, \ldots, m\}$, initialize $\vx_0$ using the spectral method, maximal iteration number $T$, $t=1$.

\ENSURE ~~
    $\vx_T$ as an estimate for $\vx^*$.

\WHILE{$t\le T$}
\STATE Choose $r$ randomly from $\{1, \ldots, m\}$ uniformly. \\
\STATE Update $\vx_t$ by using \eqref{iteration}.\\
\STATE $t\leftarrow t+1$.
\ENDWHILE
\end{algorithmic}
\end{algorithm}

The Phase Retrieval via randomized Kaczmarz method was proposed in \cite{Wei2015Solving} and analyzed in \cite{li2015phase}.
If we know the sign of $\langle \va_r, \vx^* \rangle$, according to the Kaczmarz method, $\vx_t$ is obtained by projecting $\vx_{t-1}$ onto the hyperplane determined by the linear equation $\langle \va_r, \vx \rangle = \langle \va_r, \vx^* \rangle$,
where the sign of $\va_r^{\rm T}\vx_{t-1}$ is used to estimate the unknown sign of $\va_r^{\rm T}\vx^*$.
Then the iteration becomes
\begin{align}\label{iteration}
\vx_t = \vx_{t-1}+\frac{y_r{\rm sgn}(\va_r^{\rm T}\vx_{t-1}) - \va_r^{\rm T}\vx_{t-1}}{\|\va_r\|^2}\va_r.
\end{align}
The algorithm is summarized in Algorithm \ref{alg:Framwork}.
In order to illustrate the dependence caused by data reuse in the finite measurements case,
the implementation of the randomized Kaczmarz method is visualized in Fig.~\ref{system}
for both the finite and the infinite measurements cases.

\subsection{Finite Measurements and Infinite Measurements}

The convergence behavior under the case with infinite measurements has been theoretically analyzed in our earlier work \cite{li2015phase}.
Let $d_k$ be the squared error between the estimate of the algorithm at the $k$th iteration and the true signal $\vx^*$.
Let $d(t) = d_{\lfloor tn\rfloor}$.
As $n$ tends to infinity, the random sample paths of $d(t)$ will converge to a continuous time function governed by the solutions of two deterministic, coupled ordinary differential equations (ODEs) (see Proposition 2, \cite{li2015phase}).

\begin{figure}
\begin{center}
\includegraphics[width=0.6\textwidth]{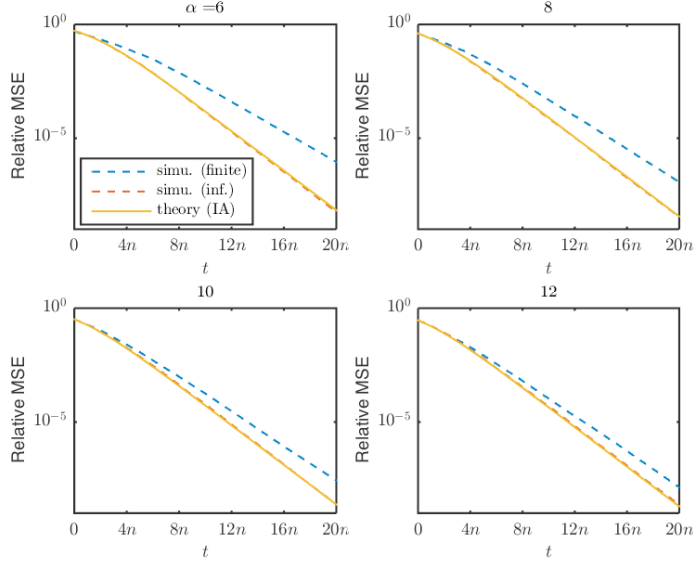}
\caption{\label{sim}Learning curves of the randomized Kaczmarz method in the numerical experiments and theoretical analysis with various $\alpha$.
$n$ is taken as 256.
The simulation results with finite and infinite measurements are denoted as ``simu.(finite)'' and ``simu.(inf.)'', respectively.
The theory with independence assumption is denoted by ``theory(IA)''.}
\end{center}
\end{figure}

Though our earlier theoretical result exactly predicts the learning curve with infinite measurements, 
it is far from applicable to the learning curve with finite measurements.
Define
$$
	\alpha := \frac{m}{n}
$$
as the sampling rate.
In Fig.~\ref{sim}, the theoretical result in \cite{li2015phase} and the experimental results with finite and infinite measurements of various sampling rate $\alpha$ are plotted in different colors.
We read that the theory based on the independence assumption in \cite{li2015phase} coincides with the simulation results in the infinite measurements case,
but when the dataset is limited and the algorithm reuses the data,
the convergence is slower than that using infinite data.
Moreover, even when we take $\alpha$ as a relatively large value such as $12$, 
the theoretical convergence speed is still noticeably faster than the experimental convergence speed in the
finite measurements case,
which suggests that even with a large $\alpha$
it is still inappropriate to use the independence assumption.
To get a better understanding of the algorithm with finite measurements, the abandon of the independence assumption in the analysis is in demand.

In addition, one may read from Fig.~\ref{sim} that even with data reuse the convergence of the randomized Kaczmarz method is still surprisingly fast, and the convergence rate appears to be linear.
Intuitively, the convergence speed may slow down when the iteration grows up,
because the \emph{impetus} or \emph{information} extracted from the measurements may gradually reduce to zero due to repeat use.
In this work, we successfully confirm the previously unjustified experimental convergence behavior by rigorous theoretical analysis which proves that the estimation error decreases exponentially during the  iterations.

%

\subsection{Initialization}

For algorithms with finite measurements, the initialization has great influence on the performance.
Some universal initialization methods with good performance have been proposed,
such as the spectral initialization originally introduced in \cite{netrapalli2013phase, candes2015phase} and
its generalization \cite{chen2015solving}.
Under the assumption that the sampling vectors consist of \emph{i.i.d.} Gaussian random variables, 
the result of the spectral initialization is aligned with the target vector $\vx^*$ in direction, 
when there are sufficiently many measurements \cite{chen2015solving}.
An exact high-dimensional analysis of the spectral method can be referred in \cite{lu2017phase}.

\section{Rigorous Analysis of randomized Kaczmarz Method without Independence Assumption}

The analysis of the convergence behavior of the randomized Kaczmarz method without the independence assumption is conducted in this section.
Though the data reuse damages the independence between the sensing vector and estimation error,
we surprisingly reveal in Theorem \ref{thm1} that this algorithm still has \emph{a linear convergence rate}.

\begin{thm}\label{thm1}
Let $\vx^* \in \mathbb{R}^n$ be any solution to the phase retrieval problem \eqref{yr2} in real case and
$\vx_t$ denote the $t$th iterative solution of Algorithm \ref{algorithm-1}.
Define
$$
{\rm dist}\left(\vx_t,\vx^*\right):=\min\left(\|\vx_t-\vx^*\|,\|\vx_t+\vx^*\|\right)
$$
to denote the estimate error up to a global phase.
Then there exist $\alpha_0$, $n_0$, when $n>n_0$, $\alpha>\alpha_0$, there exist constants $C_1>0$, $C_2>0$,
such that with probability at least
\begin{equation}\label{probability}
1-{\rm e}^{-C_1 n},
\end{equation}
we have
\begin{align}\label{bound}
\frac{{\mathbb{E}\rm dist}^2\left(\vx_t,\vx^*\right)}{{\rm dist}^2\left(\vx_{t-1},\vx^*\right)}\le 1-\frac{1}{n}C_2.
\end{align}
\end{thm}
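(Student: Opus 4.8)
The plan is to condition on the sensing vector $\va_{r_t}$ and the previous iterate $\vx_{t-1}$, and take the expectation only over the random index $r_t$, which is uniform on $\{1,\dots,m\}$ and genuinely independent of everything that came before. Writing $\ve_{t-1} := \vx_{t-1} - \vx^*$ (after fixing the global sign so that $\|\ve_{t-1}\| = {\rm dist}(\vx_{t-1},\vx^*)$), the Kaczmarz update~\eqref{iteration} gives, on the ``good event'' $\{{\rm sgn}(\va_r^{\rm T}\vx_{t-1}) = {\rm sgn}(\va_r^{\rm T}\vx^*)\}$, a genuine orthogonal projection step, so that $\|\ve_t\|^2 = \|\ve_{t-1}\|^2 - (\va_r^{\rm T}\ve_{t-1})^2/\|\va_r\|^2$; on the ``bad event'' (a sign flip) one instead gets $\|\ve_t\|^2 = \|\ve_{t-1}\|^2 - (\va_r^{\rm T}\ve_{t-1})^2/\|\va_r\|^2 + (\text{extra nonnegative term involving } 4 y_r^2/\|\va_r\|^2)$. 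Decomposing $\mathbb{E}_{r_t}\|\ve_t\|^2$ accordingly, the contraction will come from $\frac{1}{m}\sum_r (\va_r^{\rm T}\ve_{t-1})^2/\|\va_r\|^2 \gtrsim c\|\ve_{t-1}\|^2$, and the obstacle is to control the bad-event contribution $\frac{1}{m}\sum_{r:\,\text{flip}} 4(\va_r^{\rm T}\vx^*)^2/\|\va_r\|^2$ uniformly over the unit-sphere direction $\ve_{t-1}/\|\ve_{t-1}\|$.

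First I would replace $\|\va_r\|^2$ by $n$ up to a $(1\pm o(1))$ factor: a standard $\chi^2$ concentration bound plus a union bound over $r=1,\dots,m = \alpha n$ shows $\max_r \big|\|\va_r\|^2/n - 1\big| \le \delta$ with probability $1 - m\,{\rm e}^{-c\delta^2 n} \ge 1 - {\rm e}^{-C_1 n}$ for $\alpha$ fixed and $n$ large. Next, for the lower bound on the ``useful'' term, I would invoke a matrix-concentration / restricted-isometry-type estimate: for $m \ge \alpha_0 n$ with $\alpha_0$ large enough, with probability $1 - {\rm e}^{-C_1 n}$, $\frac{1}{m}\sum_r \va_r\va_r^{\rm T} \succeq \frac12 \mathbf{I}$ (indeed $\to \mathbf{I}$), hence $\frac{1}{mn}\sum_r (\va_r^{\rm T}\ve_{t-1})^2 \ge \frac{1}{2n}\|\ve_{t-1}\|^2$, uniformly in $\ve_{t-1}$ — this is where the conditioning trick pays off, since the bound must hold for the unknown, data-dependent $\ve_{t-1}$.

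The main obstacle is bounding the sign-flip term. A flip for index $r$ means $\va_r^{\rm T}\vx^*$ and $\va_r^{\rm T}\vx_{t-1} = \va_r^{\rm T}\vx^* + \va_r^{\rm T}\ve_{t-1}$ have opposite signs, which forces $|\va_r^{\rm T}\vx^*| \le |\va_r^{\rm T}\ve_{t-1}|$. Therefore the bad-event contribution is at most $\frac{4}{mn}\sum_{r:\,|\va_r^{\rm T}\vx^*| \le |\va_r^{\rm T}\ve_{t-1}|} (\va_r^{\rm T}\ve_{t-1})^2$, and I want to show this is $\le \lambda \cdot \frac{1}{n}\|\ve_{t-1}\|^2$ with $\lambda < c/2$, so that it cannot swallow the contraction. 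Writing $u = \vx^*/\|\vx^*\|$ and $v = \ve_{t-1}/\|\ve_{t-1}\|$, this reduces to a uniform (over $v \in S^{n-1}$, with $u$ fixed) bound on $\frac{1}{m}\sum_r (\va_r^{\rm T}v)^2 \mathbf{1}\{|\va_r^{\rm T}u| \le \tfrac{\|\ve_{t-1}\|}{\|\vx^*\|}|\va_r^{\rm T}v|\}$; I expect to handle the ``$v$ nearly parallel to $u$'' regime (small ${\rm dist}$) and the general regime separately, using an $\varepsilon$-net over $S^{n-1}$, Bernstein's inequality for the sub-exponential summands $(\va_r^{\rm T}v)^2\mathbf{1}\{\cdots\}$ at each net point, and a Lipschitz/perturbation argument to pass from the net to all $v$; the key quantitative input is that when ${\rm dist}(\vx_{t-1},\vx^*)$ is already small, the indicator event has small Gaussian probability, so the flip term is $o(1/n)\|\ve_{t-1}\|^2$, whereas the initialization guarantees a constant-bounded initial ${\rm dist}$, for which one checks the inequality $\lambda(\alpha_0) < c/2$ holds by making $\alpha_0$ large (so the empirical averages are within $\delta$ of their Gaussian expectations, the latter being a fixed integral strictly less than $c/2$). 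Assembling these pieces on the intersection of the (finitely many) good events, each of probability $\ge 1 - {\rm e}^{-C_1 n}$, yields $\mathbb{E}_{r_t}\|\ve_t\|^2 \le (1 - C_2/n)\|\ve_{t-1}\|^2$, and since $\|\ve_t\| \ge {\rm dist}(\vx_t,\vx^*)$ and $\|\ve_{t-1}\| = {\rm dist}(\vx_{t-1},\vx^*)$, this is exactly~\eqref{bound}.
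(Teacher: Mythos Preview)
Your overall architecture matches the paper closely: you condition on the sensing vectors and the previous iterate and average only over the uniform index $r_t$; you arrive at the same decomposition of $\|\ve_t\|^2$ into a projection term plus a sign-flip correction (the paper's \eqref{Eve-step1}); you replace $\|\va_r\|^2$ by $n$ via $\chi^2$ concentration and a union bound over $r$ (the paper's Lemma~\ref{lem2}); and you use the same key observation that a sign flip forces $|\va_r^{\rm T}\vx^*|\le|\va_r^{\rm T}\ve_{t-1}|$ (the paper's \eqref{cond-S}--\eqref{y-aTe}).

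The substantive difference is in how the flip-term bound is made uniform over the data-dependent direction $\ve_{t-1}/\|\ve_{t-1}\|$. The paper never takes a net over the sphere. Instead it notes that the flip set $\mathcal{S}$ is \emph{some} subset of $\{1,\dots,m\}$ of cardinality $\beta n$, and takes a union bound over all $\binom{m}{\beta n}$ such subsets to control $\lambda_{\max}\bigl(\tfrac{1}{|\mathcal{S}|}\sum_{k\in\mathcal{S}}\va_k\va_k^{\rm T}\bigr)$ and $\lambda_{\min}\bigl(\tfrac{1}{|\bar{\mathcal{S}}|}\sum_{k\in\bar{\mathcal{S}}}\va_k\va_k^{\rm T}\bigr)$ simultaneously for every possible $\mathcal{S}$ (Lemma~\ref{lem3}); the combinatorial factor $\binom{m}{\beta n}$ is absorbed into the deviation parameter. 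A separate order-statistics argument (Lemma~\ref{lem4}, Appendix~\ref{appendix-beta0}) then bounds $\beta$ itself in terms of $\|\ve_{t-1}\|/\|\vx^*\|$. This decouples $\mathcal{S}$ from $\ve_{t-1}$ and completely sidesteps the discontinuity of the indicator $\mathbf{1}\{|\va_r^{\rm T}u|\le c|\va_r^{\rm T}v|\}$ that your net-plus-Lipschitz step would have to confront. Your route is workable in principle (similar uniform bounds appear in the truncated Wirtinger flow literature), but passing from a net point to a nearby $v$ is precisely the step you gloss over, since the indicator is not continuous in $v$ and a naive Lipschitz bound fails; one typically needs an enlarged-indicator or monotonicity trick there. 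The paper's subset-union-bound device is the cleaner way to get uniformity in this setting and is worth knowing.
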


\begin{rem}
It has been reviewed in the preliminary that there were practical methods providing good initialization with small estimation error.
Without loss of generality, we assume that
$$
\|\vx_t-\vx^*\|\le\|\vx_t+\vx^*\|
$$
always holds for all $t\ge0$.
Define
\begin{align}\label{def-e}
\ve_t:=\vx_t-\vx^*,
\end{align}
then \eqref{bound} can be derived from
\begin{align}\label{bound12}
\frac{\mathbb{E}\|\ve_t\|^2}{\|\ve_{t-1}\|^2}\le1-\frac{1}{n}C_2.
\end{align}
Throughout the proof below, we assume that the initialized $\vx_0$ is closer to $\vx^*$ than to $-\vx^*$.
Then we only need to prove \eqref{bound12}.
\end{rem}

\begin{rem}
We claim that there is no taking expectation to $\va_r$ in the LHS of \eqref{bound12}.
As far as we know, we propose for the first time this form to study the bound of the convergence behavior of an iterative phase retrieval algorithm.
The bound in \eqref{bound12} holds for arbitrary $\ve_{t-1}$, even if it is a function of the sensing vectors,
which is dependent with $\{\va_r, r = 1, \ldots, m\}$.
On the contrary, in existing works on such convergence analysis, it is always assumed that $\ve_{t-1}$ is a random vector independent of the sensing vectors, and then the expectation is taken,
although this assumption does not hold for the finite measurements case with data reuse.
\end{rem}

\begin{proof}
We prove Theorem \ref{thm1} in four steps.
In the first step, the iterative expression of squared estimation error is derived and taken expectation with respect to \emph{the index of data sample.}
This plays the essential role in discarding the independence assumption.
In the second step, we derive the bound of the mean squared estimation error and build its connection with the eigenvalues of random matrix.
In the consequent step, based on the concentration properties and the knowledge on random matrix,
we estimate the squared error.
Finally, the result is reshaped to a ready formulation to complete the proof.

{\bf Step 1)}
We will start from \eqref{iteration} and derive the iterative expression of $\ve_t$ first.
Substituting the definition of $y_r$ in \eqref{yr2} into \eqref{iteration} and replacing the absolute value by sign function,
we have
\begin{align}
\vx_t &=\vx_{t-1}+\frac{\left|\va_r^{\rm T}\vx^{*}\right|{\rm sgn}\left(\va_r^{\rm T}\vx_{t-1}\right) - \va_r^{\rm T}\vx_{t-1}}{\|\va_r\|^2}\va_r \nonumber\\ 
&= \vx_{t-1}+\frac{\va_{r}^{\rm T}\vx^*{\rm sgn}\left(\va_{r}^{\rm T}\vx^*\va_{r}^{\rm T}\vx_{t-1}\right)}{\|\va_r\|^2}\va_{r} - \frac{\va_{r}^{\rm T}\vx_{t-1}}{\|\va_r\|^2}\va_{r} \label{eq-temp4} \\
&= \vx_{t-1}+\frac{\va_{r}^{\rm T}\vx^*b_{r,t-1}}{\|\va_r\|^2}\va_{r}
 - \frac{\va_{r}^{\rm T}\vx_{t-1}-\va_{r}^{\rm T}\vx^*}{\|\va_r\|^2}\va_{r}, \label{vx_t}
\end{align}
where
$$
b_{r,t-1}:={\rm sgn}\left(\va_{r}^{\rm T}\vx^*\va_{r}^{\rm T}\vx_{t-1}\right)-1
$$
denotes whether the temporary solution at $t-1$ is \emph{wrong} with respect to the sign measured by $\va_r$,
and \eqref{vx_t} is obtained by inserting and removing $\frac{\va_r^{\rm T}\vx^*}{\|\va_r\|^2}\va_r$ into (from) \eqref{eq-temp4}.
Subtracting the ground truth $\vx^*$ from both sides of \eqref{vx_t} and
recalling the definition of $\ve_{t}$ in \eqref{def-e},
we have
\begin{align}
        \ve_t = & \ve_{t-1}+\frac{\va_r^{\rm T}\vx^*b_{r,t-1}}{\|\va_r\|^2}\va_{r}
         -\frac{\va_r^{\rm T}\ve_{t-1}}{\|\va_r\|^2}\va_{r}\nonumber\\
        = & \left({\bf I}-\frac{\va_r\va_r^{\rm T}}{\|\va_r\|^2}\right)\ve_{t-1}
        +\frac{\va_r^{\rm T}\vx^*b_{r,t-1}}{\|\va_r\|^2}\va_{r}\label{eq-temp6}.
\end{align}
Next we study the behavior of the squared estimation error.
According to \eqref{eq-temp6}, we write
\begin{equation}\label{eq-temp7}
        \|\ve_t\|^2 = \left\|\left({\bf I}-\frac{\va_r\va_r^{\rm T}}{\|\va_r\|^2}\right)\ve_{t-1}
        +\frac{\va_r^{\rm T}\vx^*b_{r,t-1}}{\|\va_r\|^2}\va_{r}\right\|^2.
\end{equation}
Notice that $\left({\bf I}-{\va_r\va_r^{\rm T}}/{\|\va_r\|^2}\right)$ is the projection matrix to a hyperplane which is perpendicular to $\va_r$.
Therefore the cross item in the RHS of \eqref{eq-temp7} must be zero and we get
\begin{equation}\label{eq-temp8}
        \|\ve_t\|^2 = \left\|\left({\bf I}-\frac{\va_r\va_r^{\rm T}}{\|\va_r\|^2}\right)\ve_{t-1}\right\|^2
        +\left\|\frac{\va_r^{\rm T}\vx^*b_{r,t-1}}{\|\va_r\|^2}\va_{r}\right\|^2.
\end{equation}
Using the perpendicular properties again, we may simplify the first item in the RHS of \eqref{eq-temp8} as
\begin{align}\label{eq-temp9}
        \left\|\left({\bf I}-\frac{\va_r\va_r^{\rm T}}{\|\va_r\|^2}\right)\ve_{t-1}\right\|^2
        &= \|\ve_{t-1}\|^2-\left\|\frac{\va_r\va_r^{\rm T}}{\|\va_r\|^2}\ve_{t-1}\right\|^2 \nonumber\\
        &= \|\ve_{t-1}\|^2-\left(\frac{\va_r^{\rm T}\ve_{t-1}}{\|\va_r\|}\right)^2.
\end{align}
Inserting \eqref{eq-temp9} into \eqref{eq-temp8}, we get the iteration of squared estimation error as
\begin{equation}\label{e-norm}
    \left\|\ve_t\right\|^2=\|\ve_{t-1}\|^2-\left(\frac{\va_r^{\rm T}\ve_{t-1}}{\|\va_r\|}\right)^2
    +\left(\frac{\va_r^{\rm T}\vx^*}{\|\va_r\|}\right)^2b_{r,t-1}^2.
\end{equation}
In order to get the average performance of one iteration from a fixed $\ve_{t-1}$, we take expectation with respect to $r$ on both sides of \eqref{e-norm}.
Considering that $r$ is a random variable uniformly distributed over $1,\cdots, m$, we get
\begin{align}
\mathbb{E}\left\|\ve_t\right\|^2 
=&\|\ve_{t-1}\|^2-\frac{1}{m}\sum_{k=1}^m\left(\frac{\va_k^{\rm T}\ve_{t-1}}{\|\va_k\|}\right)^2
+\frac{1}{m}\sum_{k=1}^{m}\left(\frac{\va_k^{\rm T}\vx^*}{\|\va_k\|}\right)^2b_{k,t-1}^2\label{Eve-step0}\\
=&\|\ve_{t-1}\|^2-\frac{1}{m}\sum_{k=1}^m\left(\frac{\va_k^{\rm T}\ve_{t-1}}{\|\va_k\|}\right)^2
+\frac{4}{m}\sum_{k=1}^{m}\left(\frac{\va_k^{\rm T}\vx^*}{\|\va_k\|}\right)^2\mathbb{I}_{k\in\mathcal{S}},\label{Eve-step1}
\end{align}
where $\mathbb{I}(\cdot)$ is the indicator function and $\mathcal{S}$ is defined as
\begin{align*}
\mathcal{S} & := \{k: \ b_{k,t-1}\ne 0\}\\
& := \{k: \ {\rm sgn}\left(\va_{k}^{\rm T}\vx^*\right)\neq{\rm sgn}\left(\va_k^{\rm T}\vx_{t-1}\right)\}.
\end{align*}
Equation \eqref{Eve-step1} comes from the fact that those items in the second summation in the RHS of \eqref{Eve-step0} are non-zero, if and only if $k\in\mathcal{S}$.
We then move those items with $k\in\mathcal{S}$ from the first summation to the second one and obtain
\begin{align}\label{Eve-step2}
\mathbb{E}\left\|\ve_t\right\|^2=\|\ve_{t-1}\|^2-\frac{1}{m}\sum_{k\in\bar{\mathcal{S}}}\left(\frac{\va_k^{\rm T}\ve_{t-1}}{\|\va_k\|}\right)^2
+\frac{1}{m}\sum_{k\in\mathcal{S}}\left(4\left(\frac{\va_k^{\rm T}\vx^*}{\|\va_k\|}\right)^2-\left(\frac{\va_k^{\rm T}\ve_{t-1}}{\|\va_k\|}\right)^2\right),
\end{align}
where $\bar{\mathcal{S}}$ denotes the  complement of $\mathcal S$.
Now we are ready to study the bound of the second and the third items in the RHS of \eqref{Eve-step2} by using the eigenvalues of random matrices.

We must stress that one cannot simply take expectation of \eqref{e-norm} with respect to $\va_r$ by fixing $\ve_{t-1}$.
The reason is that $\va_r$ is dependent on $\ve_{t-1}$ and is not a Gaussian random vector any more conditioning on $\ve_{t-1}$.
This is the key point of our approach of abandoning the independence assumption.

{\bf Step 2)}
Denote
\begin{align}
\beta  := &\frac{\left|\mathcal{S}\right|}{n},\label{eq-def-beta}\\
\alpha-\beta  = &\frac{\left|\bar{\mathcal{S}}\right|}{n},\nonumber
\end{align}
respectively, as the ratio of \emph{wrong} and \emph{correct} measurements number to unknown variable number,
where $\left|\cdot\right|$ denotes the cardinality of a set.

For the second item in the RHS of \eqref{Eve-step2},
we first write it in a matrix form and find its relationship with the eigenvalues of this matrix.
\begin{align}
\frac{1}{m}\sum_{k\in\bar{\mathcal{S}}}\frac{\left(\va_k^{\rm T}\ve_{t-1}\right)^2}{\|\va_k\|^2}& \geq \frac{1}{z_{\max}^2m}\sum_{k\in\bar{\mathcal{S}}}\left(\va_k^{\rm T}\ve_{t-1}\right)^2\nonumber\\
& = \frac{1}{z_{\max}^2}\frac{\alpha-\beta}{\alpha}\ve_{t-1}^{\rm T}\bm\Sigma_{\bar{\mathcal{S}}}\ve_{t-1},\label{second_term_pre1}
\end{align}
where
\begin{align*}
z_{\max}&=\max{\left\|\va_{k}\right\|},\\
\bm\Sigma_{\bar{\mathcal{S}}} &= \frac{1}{|\bar{\mathcal{S}}|}\sum_{k\in\bar{\mathcal{S}}}\va_k\va_k^{\rm T}.
\end{align*}

For the third item in the RHS of \eqref{Eve-step2}, we will bound $\left|\va_k^{\rm T}\vx^*\right|$ with $\left|\va_k^{\rm T}\ve_{t-1}\right|$ first, and then write it in the matrix form.
Notice that
\begin{align}
\va_k^{\rm T}\vx^*\va_k^{\rm T}\vx_{t-1} = \va_k^{\rm T}\vx^*\left(\va_k^{\rm T}\vx^*+\va_k^{\rm T}\ve_{t-1}\right).
\end{align}
For $k\in\mathcal{S}$, we know that the absolute value of $\va_k^{\rm T}\ve_{t-1}$ is large enough to change the sign of $\va_k^{\rm T}\vx_{t-1}$ different from $\va_k^{\rm T}\vx^*$.
As a consequence,
\begin{align}\label{cond-S}
\mathcal{S}\subset \left\{k:\ \left|\va_k^{\rm T}\vx^*\right|\le\left|\va_k^{\rm T}\ve_{t-1}\right|\right\},
\end{align}
and we have
\begin{align}\label{y-aTe}
\left(\frac{\va_k^{\rm T}\vx^*}{\|\va_k\|}\right)^2\le\left(\frac{\va_k^{\rm T}\ve_{t-1}}{\|\va_k\|}\right)^2, \quad k\in\mathcal{S}.
\end{align}
Substituting \eqref{y-aTe} into the third item in the RHS of \eqref{Eve-step2} and formulating the result in a matrix form, we get
\begin{align}
\frac{1}{m}\sum_{k\in\mathcal{S}}\left(4\left(\frac{\va_k^{\rm T}\vx^*}{\|\va_k\|}\right)^2-\left(\frac{\va_k^{\rm T}\ve_{t-1}}{\|\va_k\|}\right)^2\right)
\le& \frac{1}{z_{\min}^2}\frac{3}{m}\sum_{k\in \mathcal{S}}\left(\va_k^{\rm T}\ve_{t-1}\right)^2\nonumber \\
=&\frac{1}{z_{\min}^2}\frac{3\beta}{\alpha}\ve_{t-1}^{\rm T}\bm\Sigma_{\mathcal{S}}\ve_{t-1}\label{},
\label{third_term_pre1}
\end{align}
where
\begin{align}
z_{\min}&=\min{\left\|\va_{k}\right\|}\nonumber,\\
\bm\Sigma_{\mathcal{S}}&=\frac{1}{|\mathcal{S}|}\sum_{k \in\mathcal{S}}\va_k\va_k^{\rm T}.\label{def-Sigma}
\end{align}

Inserting \eqref{second_term_pre1} and \eqref{third_term_pre1} into \eqref{Eve-step2}
and utilizing the properties of eigenvalues,
$$
	\lambda_{\min}({\bf A})\|{\bf u}\|^2 \le {\bf u}^{\rm T}{\bf A}{\bf u} \le \lambda_{\max}({\bf A})\|{\bf u}\|^2,
$$
where $\lambda_{\min}(\cdot)$ and $\lambda_{\max}(\cdot)$ denotes, respectively, the smallest and largest eigenvalue of a matrix,
we arrive at
\begin{equation}\label{full_bound}
\frac{\mathbb{E}\|\ve_t\|^2}{\|\ve_{t-1}\|^2} \le 1 - \frac{\alpha-\beta}{z_{\max}^2\alpha}\lambda_{\min}\left(\bm\Sigma_{\bar{\mathcal{S}}}\right)
+ \frac{3\beta}{z_{\min}^2\alpha}\lambda_{\max}\left(\bm\Sigma_{\mathcal{S}}\right).
\end{equation}

Notice again that the bound in \eqref{full_bound} holds no matter what dependence $\ve_{t-1}$ and $\va_r$ have.
The inappropriate independence assumption is successfully abandoned in this work.
This is different from all the previous theoretical works on this topic.

{\bf Step 3)}
Now we need to estimate $z_{\max}, z_{\min}$, and the smallest and the largest eigenvalue of $\bm\Sigma_{\bar{\mathcal{S}}}$ and $\bm\Sigma_{\mathcal{S}}$, respectively.
According to their definitions, the estimation problems are closely related to the properties of Gaussian random matrix.
We then develop the following two lemmas, which could be readily used to solve the above estimation problems.

\begin{lem}\label{lem2}
Let $\va$ be a vector in $\mathbb{R}^{n}$, whose entries are independent standard Gaussian random variables. Then for any $0<\varepsilon < 1$, with probability at least
\begin{align*}
P_1(\varepsilon,n):=1-2\myexp{-n\left(\frac{\varepsilon^2}{4}-\frac{\varepsilon^3}{6}\right)},
\end{align*}
we have
\begin{align}\label{lem-eq1}
\left|\frac{\|\va\|^2}{n}-1\right|< \varepsilon.
\end{align}
\end{lem}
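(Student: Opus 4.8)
The plan is to recognize $\|\va\|^2=\sum_{i=1}^n a_i^2$ as a chi-squared random variable with $n$ degrees of freedom and to control each of its two tails by the Chernoff (exponential-moment) method. The starting point is the moment generating function of a single squared standard Gaussian, $\mathbb{E}\,\myexp{t a_i^2}=(1-2t)^{-1/2}$ for $t<1/2$, so that by independence $\mathbb{E}\,\myexp{t\|\va\|^2}=(1-2t)^{-n/2}$ on the same range of $t$.

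For the upper tail I would apply Markov's inequality to $\myexp{t\|\va\|^2}$ with $t\in(0,1/2)$, obtaining $\mathbb{P}\!\left(\|\va\|^2\ge n(1+\varepsilon)\right)\le \myexp{-tn(1+\varepsilon)}(1-2t)^{-n/2}$, and then minimize the exponent over $t$. The optimal choice is $t=\varepsilon/\big(2(1+\varepsilon)\big)$, which gives the exponent $-\tfrac{n}{2}\big(\varepsilon-\ln(1+\varepsilon)\big)$. To convert this into the stated form I would use the Taylor series $\varepsilon-\ln(1+\varepsilon)=\tfrac{\varepsilon^2}{2}-\tfrac{\varepsilon^3}{3}+\tfrac{\varepsilon^4}{4}-\cdots$, which for $0<\varepsilon<1$ is alternating with terms decreasing in magnitude, hence bounded below by $\tfrac{\varepsilon^2}{2}-\tfrac{\varepsilon^3}{3}$. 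This yields $\mathbb{P}\!\left(\|\va\|^2\ge n(1+\varepsilon)\right)\le \myexp{-n\big(\tfrac{\varepsilon^2}{4}-\tfrac{\varepsilon^3}{6}\big)}$. For the lower tail I would symmetrically bound $\mathbb{P}\!\left(\|\va\|^2\le n(1-\varepsilon)\right)\le \myexp{sn(1-\varepsilon)}(1+2s)^{-n/2}$ for $s>0$, optimize at $s=\varepsilon/\big(2(1-\varepsilon)\big)$, and reach the exponent $-\tfrac{n}{2}\big(-\varepsilon-\ln(1-\varepsilon)\big)$; since $-\varepsilon-\ln(1-\varepsilon)=\tfrac{\varepsilon^2}{2}+\tfrac{\varepsilon^3}{3}+\cdots\ge\tfrac{\varepsilon^2}{2}\ge\tfrac{\varepsilon^2}{4}-\tfrac{\varepsilon^3}{6}$, the lower tail obeys the same bound. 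A union bound over the two events then gives $\mathbb{P}\!\left(\big|\|\va\|^2/n-1\big|\ge\varepsilon\right)\le 2\,\myexp{-n\big(\tfrac{\varepsilon^2}{4}-\tfrac{\varepsilon^3}{6}\big)}=1-P_1(\varepsilon,n)$, which is exactly the claim.

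There is no genuine obstacle here: the moment generating function evaluation and the one-variable optimizations are routine, and the only point requiring a line of care is the elementary verification that $\varepsilon-\ln(1+\varepsilon)\ge\tfrac{\varepsilon^2}{2}-\tfrac{\varepsilon^3}{3}$ and $-\varepsilon-\ln(1-\varepsilon)\ge\tfrac{\varepsilon^2}{2}$ hold uniformly for $\varepsilon\in(0,1)$, which follows directly from the series expansions. In effect this is the Laurent--Massart norm concentration bound with the constants arranged to produce the exponent $\tfrac{\varepsilon^2}{4}-\tfrac{\varepsilon^3}{6}$.
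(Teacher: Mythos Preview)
Your proposal is correct and follows essentially the same approach as the paper: both apply the Chernoff/exponential-moment method to the chi-squared variable $\|\va\|^2$, optimize the free parameter to obtain the exponents $-\tfrac{n}{2}(\varepsilon-\ln(1+\varepsilon))$ and $-\tfrac{n}{2}(-\varepsilon-\ln(1-\varepsilon))$ for the two tails, and then use the Taylor expansions of $\ln(1\pm\varepsilon)$ to reach the common bound $\myexp{-n(\varepsilon^2/4-\varepsilon^3/6)}$ before taking a union bound. The only cosmetic differences are that the paper derives the moment generating function via the characteristic function and parametrizes with $\lambda=tn$ rather than $t$.
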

\begin{proof}
The proof is postponed to Appendix \ref{proof-lem2}.
\end{proof}

\begin{lem}\label{lem3}
Let $\va_1,\cdots,\va_m$ be $m$ vectors in $\mathbb{R}^{n}$, whose entries are independent standard Gaussian random variables.
For given integer $1\le p \le m$ and any $\mathcal{S}\subset \{1,\cdots,m\}$ satisfying $|\mathcal{S}|=p$,
we can use $\{\va_i\}_{i \in \mathcal{S}}$ to compose an $n\times n$ matrix
$$
{\bm\Sigma}_{\mathcal{S}}=\frac{1}{p}\sum_{i\in \mathcal{S}}\va_i\va_i^{\rm T}.
$$
Then with probability at least
\begin{align}\label{P3}
P_2\left(\varepsilon_1,p\right):=1- C_{m}^{p}\myexp{-\frac{p\varepsilon_1^2}{2}},
\end{align}
we have
\begin{align}\label{lem-eq3}
\max_{\mathcal{S}:|\mathcal{S}|=p}\lambda_{\max}\left({\bm\Sigma}_{\mathcal{S}}\right) \le \left(1 + \frac{1}{\sqrt{p/n}} + \varepsilon_1\right)^2.
\end{align}
If $p\ge n$, with probability at least $P_2\left(\varepsilon_2,p\right)$,
we have
\begin{align}\label{lem-eq2}
\min_{\mathcal{S}:|\mathcal{S}|=p}\lambda_{\min}\left({\bm\Sigma}_{\mathcal{S}}\right) \ge \left(1 - \frac{1}{\sqrt{p/n}} - \varepsilon_2\right)^2.
\end{align}
\end{lem}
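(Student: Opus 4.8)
The plan is to reduce both claims to a classical non-asymptotic bound on the singular values of a Gaussian random matrix and then take a union bound over all $\binom{m}{p}$ index sets of size $p$. Fix a set $\mathcal{S}$ with $|\mathcal{S}|=p$, and stack the vectors $\{\va_i\}_{i\in\mathcal{S}}$ as the rows of a $p\times n$ matrix $\mathbf{A}_{\mathcal{S}}$; then $\bm\Sigma_{\mathcal{S}} = \frac1p \mathbf{A}_{\mathcal{S}}^{\rm T}\mathbf{A}_{\mathcal{S}}$, so $\lambda_{\max}(\bm\Sigma_{\mathcal{S}}) = \sigma_{\max}(\mathbf{A}_{\mathcal{S}})^2/p$ and, when $p\ge n$, $\lambda_{\min}(\bm\Sigma_{\mathcal{S}}) = \sigma_{\min}(\mathbf{A}_{\mathcal{S}})^2/p$. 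The standard concentration result for a $p\times n$ matrix with i.i.d.\ standard Gaussian entries (Davidson--Szarek / Vershynin) gives, for any $s>0$,
\begin{align*}
\mathbb{P}\left(\sqrt{p}-\sqrt{n}-s \le \sigma_{\min}(\mathbf{A}_{\mathcal{S}}) \le \sigma_{\max}(\mathbf{A}_{\mathcal{S}}) \le \sqrt{p}+\sqrt{n}+s\right) \ge 1 - 2\myexp{-s^2/2}.
\end{align*}
Choosing $s = \varepsilon_1\sqrt{p}$ turns the upper tail into $\sigma_{\max}(\mathbf{A}_{\mathcal{S}}) \le \sqrt{p}\,(1+\sqrt{n/p}+\varepsilon_1)$, i.e.\ $\lambda_{\max}(\bm\Sigma_{\mathcal{S}}) \le (1+1/\sqrt{p/n}+\varepsilon_1)^2$, with failure probability at most $\myexp{-p\varepsilon_1^2/2}$ for this fixed $\mathcal{S}$; likewise for the lower tail with $s=\varepsilon_2\sqrt{p}$.

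Next I would apply the union bound. There are $\binom{m}{p}=C_m^p$ choices of $\mathcal{S}$, so the probability that the upper bound fails for \emph{some} $\mathcal{S}$ of size $p$ is at most $C_m^p\,\myexp{-p\varepsilon_1^2/2}$, which is exactly $1-P_2(\varepsilon_1,p)$; this yields \eqref{lem-eq3}. The same argument with $\varepsilon_2$ gives \eqref{lem-eq2}, where the hypothesis $p\ge n$ is used so that $\sqrt{p}-\sqrt{n}\ge 0$ and the lower singular value bound is non-trivial (and so that $\mathbf{A}_{\mathcal{S}}$ has full column rank, making $\lambda_{\min}$ the relevant quantity rather than it being zero). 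Only one of the two tails is needed for each inequality, which accounts for the single factor (rather than a factor of $2$) inside $P_2$; if one insists on invoking the two-sided bound above, the constant is absorbed into $C_m^p$ for $m$ large, or one simply uses the one-sided Gaussian concentration inequality directly.

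The only real subtlety — and the step I would be most careful about — is making sure the quantifier order is right: we need a single high-probability event on which \emph{every} size-$p$ subset simultaneously satisfies the eigenvalue bound, because in the application (Step 3 of the main proof) the set $\mathcal{S}$ is data-dependent and adversarially correlated with $\ve_{t-1}$, so we cannot fix $\mathcal{S}$ in advance. This is precisely why the $C_m^p$ union-bound factor appears and cannot be avoided; everything else is a direct substitution into the Davidson--Szarek estimate. I would also note in passing that the bound is stated for a fixed $p$, whereas later one needs it for the random value $p=|\mathcal{S}|$; that is handled in the main proof by a further union over the at most $m$ possible values of $p$, so it need not be addressed inside this lemma.
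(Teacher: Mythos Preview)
Your proposal is correct and follows essentially the same route as the paper: invoke the Davidson--Szarek singular-value concentration bound for a fixed Gaussian matrix $\mathbf{A}_{\mathcal{S}}$, rescale by taking $s=\varepsilon_i\sqrt{p}$ to obtain the per-subset eigenvalue bound with failure probability $\exp(-p\varepsilon_i^2/2)$, and then union-bound over the $C_m^p$ subsets. The paper states and uses exactly the same ingredients (its Lemma~\ref{lem0} is the one-sided Davidson--Szarek estimate), so there is nothing materially different between the two arguments; your additional remarks on why the uniform-over-$\mathcal{S}$ statement is essential and on the one-sided tail are accurate and, if anything, make the logic more explicit than the paper does.
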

\begin{proof}
The proof is postponed to Appendix \ref{proof-lem3}.
\end{proof}


Now we use these Lemmas to estimate the last two items in the RHS of \eqref{full_bound}.
Specifically, with probability at least
\begin{align}\label{P_pre_final}
P(\beta) = &1-m\left(1-P_1(\varepsilon_1,n)\right) -\left(1-P_2\left(\varepsilon_2,\beta n\right)\right)-\left(1-P_2\left(\varepsilon_3,m-\beta n\right)\right)\nonumber\\
 = &1-2m\myexp{-n\left(\frac{\varepsilon_1^2}{4}-\frac{\varepsilon_1^3}{6}\right)}
 -{\rm C}_{m}^{\beta n}\myexp{-\frac{\beta n\varepsilon_2^2}{2}}
 -{\rm C}_{m}^{(\alpha -\beta)n}\myexp{-\frac{\left(\alpha -\beta\right)n\varepsilon_3^2}{2}},
\end{align}
we have
\begin{align}\label{Eve_pre_final}
\frac{\mathbb{E}\left\|\ve_t\right\|^2}{\|\ve_{t-1}\|^2}
\le 1\!-\!\left(\!1\!-\frac{\beta}{\alpha}\right)\frac{1}{n(1+\varepsilon_1)}\left(1\!-\!\frac{1}{\sqrt{\alpha\!-\!\beta}}-\varepsilon_3\right)^2
+\frac{3\beta}{\alpha}\frac{1}{n\left(1-\varepsilon_1\right)}\left(1+\frac{1}{\sqrt{\beta}}+\varepsilon_2\right)^2.
\end{align}

{\bf Step 4)}
To complete the proof, we need to formulate \eqref{P_pre_final} and \eqref{Eve_pre_final} into the shape of \eqref{probability} and \eqref{bound12}, respectively, and demonstrate that both $C_1$ and $C_2$ are positive.

In order to remove the influence of $\beta$, we first calculate its upper bound, denoted by $\beta_0$,
and then study the lower bound of convergence rate. 
According to our analysis, where the detail is postponed to Appendix \ref{appendix-beta0},
when $n$ approaches infinity with $\alpha$ fixed, the asymptotical bound $\beta'_0$ is determined by
\begin{align}\label{eq-solving-beta0}
\left(1 - \frac{\alpha}{\beta'_0}\frac{2\tau}{\sqrt{2\pi}}\myexp{-\frac{\tau^2}{2}}\right)\|\vx^*\|^2 =
 \left(1 + \frac{1}{\sqrt{\beta'_0}} + \sqrt{2\ln \frac{{\rm e}\alpha}{\beta'_0}}\right)^2\|\ve_{t-1}\|^2,
\end{align}
where $1-2Q\left(\tau\right)={\beta'_0}/{\alpha}$ and $Q\left(\cdot\right)$ is the tail probability of the standard Gaussian distribution.
Because of the continuity,
for all $\delta_\beta>0$, there exists $n_\beta >0$,
such that for all $n\ge n_\beta$, we can use
$
	\beta_0 := \beta'_0 + \delta_\beta
$
as a bound.

By introducing
\begin{align}
{\hat\varepsilon}_1& :=\varepsilon_1,\quad\label{def-hat-e-1}\\
{\hat\varepsilon}_2& :=\sqrt{\varepsilon_2^2-2\ln\frac{{\rm e}\alpha}{\beta_0}},\quad\label{def-hat-e-2}\\
{\hat\varepsilon}_3& :=\sqrt{\varepsilon_3^2-2\frac{\beta_0}{\alpha-\beta_0}\ln\frac{{\rm e}\alpha}{\beta_0}},\label{def-hat-e-3}
\end{align}
we could formulate $P(\beta_0)$ defined in \eqref{P_pre_final} into the shape of \eqref{probability} and verify $C_1>0$ for all $n$ larger than a constant $n_{C_1}$.
The details are necessary but tedious, therefore they are postponed to Appendix \ref{proof-C1p}.

Replacing $\beta$ by $\beta_0$ and substituting \eqref{def-hat-e-1}, \eqref{def-hat-e-2}, and \eqref{def-hat-e-3}  into \eqref{Eve_pre_final}, we claim that with probability at least $P(\beta_0)$
the following holds
\begin{equation}\label{Eve_mid}
\frac{\mathbb{E}\|\ve_{t}\|^2}{\|\ve_{t-1}\|^2} \le 1 - \frac{1}{n}C_2(n,\alpha),
\end{equation}
where
\begin{align*}
C_2(n,\alpha) :=
&\left(1-\frac{\beta_0}{\alpha}\right)\frac{1}{1+\hat{\varepsilon}_1}\left(1-\frac{1}{\sqrt{\alpha\!-\!\beta_0}}-\sqrt{\frac{2\beta_0}{\alpha\!-\!\beta_0}\ln\frac{{\rm e}\alpha}{\beta_0}+\hat{\varepsilon}_3^2}\right)^2\\
&-\frac{3\beta_0}{\alpha}\frac{1}{1-\hat{\varepsilon}_1}\left(1+\frac{1}{\sqrt{\beta_0}}+\sqrt{2\ln\frac{{\rm e}\alpha}{\beta_0}+\hat{\varepsilon}_2^2}\right)^2.
\end{align*}
Our final task is to analyze that $C_2$ is positive when $n$ is larger than a constant $n_0$ and $\alpha$ is larger than a constant $\alpha_0$.
We will consider the asymptotical case of $n$ approaching infinity to verify this issue.
If we could prove that $C_2(n,\alpha)$ tends towards a positive constant when $n$ goes to infinity and $\alpha$ is large enough,
then there always exist $n_{C_2}$, $\alpha_0$, when $n>n_{C_2}$, $\alpha>\alpha_0$,  we have $C_2 >0$.
As a consequence, we may take $n_0=\max(n_\beta,n_{C_1},n_{C_2})$ and close the proof.

According to Lemma \ref{lem2}, Lemma \ref{lem3} and the derivation in Appendix \ref{proof-C1p},
when $n$ goes to infinity, let $\hat{\varepsilon}_i\to 0, i=1,2,3$, we have
\begin{align}
\lim_{n\to\infty}\frac{\mathbb{E}\|\ve_{t}\|^2-\|\ve_{t-1}\|^2}{\|\ve_{t-1}\|^2/n}
\le&  \lim_{n\to\infty} -C_2(n,\alpha)\nonumber\\
 =&-\frac{\alpha-\beta_0}{\alpha}\left(1-\frac{1}{\sqrt{\alpha-\beta_0}}-\sqrt{\frac{2\beta_0}{\alpha-\beta_0}\ln\frac{{\rm e}\alpha}{\beta_0}}\right)^2\nonumber\\
&+\frac{3\beta_0}{\alpha}\left(1+\frac{1}{\sqrt{\beta_0}}+\sqrt{2\ln\frac{{\rm e}\alpha}{\beta_0}}\right)^2.\label{Eve_final}
\end{align}

According to \eqref{eq-solving-beta0}, when $\|\ve_{t-1}\|$ is small enough, $\beta_0$ is nearly zero.
In this case, \eqref{Eve_final} is approximated by
\begin{equation}\label{boundlim}
\lim_{n\to\infty}C_2(n,\alpha) \approx -\frac{3}{\alpha}+ \left(1 - \frac{1}{\sqrt{\alpha}}\right)^2.
\end{equation}
We read from $\eqref{boundlim}$ that there exits some $\alpha_0$, when $\alpha>\alpha_0$,
the RHS of \eqref{boundlim} is positive.
Then when the initialization is good enough
\footnote{A good initialization is possible through some practical methods, such as the truncated spectral method \cite{chen2015solving}.} so that $\|\ve_{t-1}\|$ is small enough, $C_2$ approaches to a positive constant as $n$ goes to infinity.
Based on our previous discussion, this completes the proof well, and we demonstrate that algorithm \ref{algorithm-1} will converge to a solution linearly.
\end{proof}

\begin{rem}\label{rem_lowerbound1}
We have found the upper bound of the relative mean squared estimation error without the independence assumption.
In order to better understand the convergence behavior of this algorithm, we also need to study its lower bound.
This will be accomplished based on the iteration of \eqref{Eve-step1},
of which the third item is always no less than zero.
For the second item, we can find its upper bound by using Lemma \ref{lem2} and \ref{lem3}.
That is
\begin{align*}
\frac{1}{m}\sum_{k=1}^m\left(\frac{\va_k^{\rm T}\ve_{t-1}}{\|\va_k\|}\right)^2\le& \frac{1}{z_{\min}^2}\ve_{t-1}^{\rm T}{\bm\Sigma}\ve_{t-1}\\
\le& \frac{1}{z_{\min}^2}\lambda_{\max}\left({\bm\Sigma}\right)\|\ve_{t-1}\|^2,
\end{align*}
where
$$
{\bm\Sigma}:=\frac{1}{m}\sum_{k=1}^m\va_k\va_k^{\rm T}.
$$
As a consequence we write the lower bound as
\begin{equation}\label{eq-lower-bound1}
\lim_{n \to \infty}\frac{\mathbb{E}\|\ve_{t}\|^2-\|\ve_{t-1}\|^2}{\|\ve_{t-1}\|^2/n} \geq -\left(1 + \frac{1}{\sqrt{\alpha}}\right)^2.
\end{equation}
\end{rem}

\section{Conclusion}
In this paper, we reveal the linear rate of convergence of Phase Retrieval via the randomized Kaczmarz method in the real case with finite measurements.
The main advantage over the previous theoretical approaches for phase retrieval
is that the independence assumption inappropriate for the date reuse case
is discarded in our analysis.
The upper and lower bounds of the rate of convergence are given. 
The methodology we used could be adopted to analyze other problems.

\section{Appendix}

\subsection{Proof of Lemma \ref{lem2}}
\label{proof-lem2}

The proof will be completed, if we could verify the two inequalities below
\begin{align}
\mathbb{P}\left(\frac{\|\va\|^2}{n}\geq 1+\varepsilon\right)&\le\myexp{-n\left(\frac{\varepsilon^2}{4}-\frac{\varepsilon^3}{6}\right)},\label{subconclusion1}\\
\mathbb{P}\left(\frac{\|\va\|^2}{n}\leq 1-\varepsilon\right)&\le\myexp{-n\left(\frac{\varepsilon^2}{4}-\frac{\varepsilon^3}{6}\right)}.\label{subconclusion2}
\end{align}
Let's begin from the LHS of \eqref{subconclusion1} and write
\begin{align}
\mathbb{P}\left(\frac{\|\va\|^2}{n}\geq 1+\varepsilon\right)&=\mathbb{P}\left(\myexp{\lambda\frac{\|\va\|^2}{n}}\ge \myexp{\lambda\left(1+\varepsilon\right)}\right)\nonumber\\
&\leq\frac{\mathbb{E}\left(\myexp{\lambda\frac{\|\va\|^2}{n}}\right)}{\myexp{\lambda\left(1+\varepsilon\right)}},\label{lemma1_temp1}
\end{align}
where \eqref{lemma1_temp1} comes from the Markov's inequality for $X > 0, a > 0$ that
$$
\mathbb{P}\left(X\ge a\right)\le \frac{\mathbb{E}\left(X\right)}{a}
$$
and $\lambda$ is a positive parameter to increase the degrees of freedom to get a more tight bound.

Notice that $\|{\bf a}\|^2$ follows the $n$-dimensional Chi-square distribution.
Adopting the characteristic function of Chi-square distributed variable $x$
\begin{align}\label{Chrc-fun}
\mathbb{E}\left({\rm e}^{{\rm i}tx}\right)=\left(1-2{\rm i}t\right)^{-\frac{n}{2}},
\end{align}
and let $t=-{\rm i}\frac{\lambda}{n}$,
we derive from \eqref{lemma1_temp1}
\begin{align}
\mathbb{P}\left(\frac{\|\va\|^2}{n}\geq 1+\varepsilon\right)
\le&\left(1-2\frac{\lambda}{n}\right)^{-\frac{n}{2}}\myexp{-\lambda\left(1+\varepsilon\right)}\nonumber\\
=&\myexp{-\lambda\left(1+\varepsilon\right)-\frac{n}{2}\ln\left(1-2\frac{\lambda}{n}\right)}\nonumber\\ 
\le& \max_{\lambda} \myexp{-\lambda\left(1+\varepsilon\right)-\frac{n}{2}\ln\left(1-2\frac{\lambda}{n}\right)}\nonumber\\
 =& \myexp{\frac{n}{2}\left(-\varepsilon+\ln\left(1+\varepsilon\right)\right)}.\label{P21-ori}
\end{align}

By approximating the natural logarithm in \eqref{P21-ori} as its Taylor's series of $\varepsilon$,
we can get
\begin{align}\label{series1}
\frac{n}{2}\left(-\varepsilon+\ln\left(1+\varepsilon\right)\right)
=&\frac{n}{2}\left(-\varepsilon+\left(\varepsilon-\frac{1}{2}\varepsilon^2+\frac{1}{3}\varepsilon^3+\sum_{k=4}^{\infty}\frac{\left(-1\right)^{k+1}}{k}\varepsilon^k\right)\right)\nonumber\\
<&-n\left(\frac{\varepsilon^2}{4}-\frac{\varepsilon^3}{6}\right).
\end{align}
Inserting \eqref{series1} into \eqref{P21-ori}, we readily verify \eqref{subconclusion1}.

Following the same approach, we could derive step by step a counterpart of \eqref{P21-ori} as
\begin{align}
\mathbb{P}\left(\frac{\|\va\|^2}{n}\leq 1-\varepsilon\right)
=&\mathbb{P}\left(\myexp{-\lambda\frac{\|\va\|^2}{n}}\ge \myexp{-\lambda\left(1-\varepsilon\right)}\right)\nonumber\\
\le&\frac{\mathbb{E}\left(\myexp{-\lambda\frac{\|\va\|^2}{n}}\right)}{\myexp{-\lambda\left(1-\varepsilon\right)}}\nonumber\\
\le&\left(1+2\frac{\lambda}{n}\right)^{-\frac{n}{2}}\myexp{\lambda\left(1-\varepsilon\right)}\nonumber\\
=&\myexp{\lambda\left(1-\varepsilon\right)-\frac{n}{2}\ln\left(1+2\frac{\lambda}{n}\right)}\nonumber\\
\le& \myexp{\frac{n}{2}\left(\varepsilon+\ln\left(1-\varepsilon\right)\right)}. \label{P22}
\end{align}
Similarly bounding the natural logarithm in \eqref{P22} by truncating the Taylor's series, we have
\begin{align}\label{series2}
\frac{n}{2}\left(\varepsilon+\ln\left(1-\varepsilon\right)\right)
=&\frac{n}{2}\left(\varepsilon+\left(-\varepsilon-\frac{1}{2}\varepsilon^2-\sum_{k=3}^{\infty}\frac{1}{k}\varepsilon^k\right)\right)\nonumber\\
<&-\frac{n\varepsilon^2}{4}\nonumber\\
<&-n\left(\frac{\varepsilon^2}{4}-\frac{\varepsilon^3}{6}\right).
\end{align}
Inserting \eqref{series2} into \eqref{P22}, we verify \eqref{subconclusion2} and close the proof.

\subsection{Proof of Lemma \ref{lem3}}
\label{proof-lem3}

The following lemma is required in the proof.

\begin{lem}\label{lem0}
(\cite{Davidson2001Local})
Let ${\bf A}$ be an $n\times p$ matrix whose entries are independent standard Gaussian random variables.
Then for every $\varepsilon>0$, one has
\begin{align*}
\mathbb{P}\left(s_{\min}\left({\bf A}\right) \ge \sqrt{p}-\sqrt{n}-\varepsilon\right) &\ge 1-\myexp{-\frac{\varepsilon^2}2},\\
\mathbb{P}\left(s_{\max}\left({\bf A}\right)\le \sqrt{p}+\sqrt{n}+\varepsilon\right) &\ge 1-\myexp{-\frac{\varepsilon^2}2},
\end{align*}
where $s_{\min}(\cdot)$ and $s_{\max}(\cdot)$ denote, respectively, the smallest and the largest singular values of a matrix.
\end{lem}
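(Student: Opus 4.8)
The plan is to prove Lemma \ref{lem0} by separating the two ingredients that govern an extreme singular value of a Gaussian matrix: concentration about the mean and the location of the mean. For the concentration step I would observe that, as functions of the $np$ entries of ${\bf A}$ equipped with the Euclidean (Frobenius) metric, both $s_{\max}$ and $s_{\min}$ are $1$-Lipschitz, since Weyl's perturbation inequality gives $\left|s_{\max}({\bf A})-s_{\max}({\bf B})\right|\le\|{\bf A}-{\bf B}\|_{\rm op}\le\|{\bf A}-{\bf B}\|_F$ and likewise for $s_{\min}$. The Gaussian concentration inequality for Lipschitz functions then yields, for every $\varepsilon>0$,
\begin{align*}
\mathbb{P}\left(s_{\max}({\bf A})\ge\mathbb{E}\,s_{\max}({\bf A})+\varepsilon\right)\le\myexp{-\frac{\varepsilon^2}{2}},\qquad
\mathbb{P}\left(s_{\min}({\bf A})\le\mathbb{E}\,s_{\min}({\bf A})-\varepsilon\right)\le\myexp{-\frac{\varepsilon^2}{2}}.
\end{align*}
Consequently the lemma reduces to the two mean estimates $\mathbb{E}\,s_{\max}({\bf A})\le\sqrt{p}+\sqrt{n}$ and $\mathbb{E}\,s_{\min}({\bf A})\ge\sqrt{p}-\sqrt{n}$.

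To estimate the means I would use the variational descriptions $s_{\max}({\bf A})=\max_{w,z}w^{\rm T}{\bf A}z$ and $s_{\min}({\bf A})=\min_{w}\max_{z}w^{\rm T}{\bf A}z$, with $w$ ranging over the unit sphere $S^{n-1}$ and $z$ over $S^{p-1}$, and compare the centered Gaussian process $X_{w,z}:=w^{\rm T}{\bf A}z$, whose covariance is $(w^{\rm T}w')(z^{\rm T}z')$, against the auxiliary process $Y_{w,z}:=g^{\rm T}w+h^{\rm T}z$ built from independent $g\sim\mathcal{N}(0,{\bf I}_n)$ and $h\sim\mathcal{N}(0,{\bf I}_p)$. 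For the upper bound, Sudakov--Fernique applies directly: one checks $\mathbb{E}(X_{w,z}-X_{w',z'})^2\le\mathbb{E}(Y_{w,z}-Y_{w',z'})^2$ for all pairs (the gap equals $2(1-w^{\rm T}w')(1-z^{\rm T}z')\ge0$), whence $\mathbb{E}\,s_{\max}=\mathbb{E}\max_{w,z}X_{w,z}\le\mathbb{E}\max_{w,z}Y_{w,z}=\mathbb{E}\|g\|+\mathbb{E}\|h\|$. For the lower bound I would invoke Gordon's min--max comparison inequality to get $\mathbb{E}\min_{w}\max_{z}X_{w,z}\ge\mathbb{E}\min_{w}\max_{z}Y_{w,z}=\mathbb{E}\|h\|-\mathbb{E}\|g\|$; since Gordon's theorem is stated for processes of equal variance, a preliminary step is to adjoin an independent standard Gaussian to $X_{w,z}$ (or rescale) so that the two processes share a common variance, after which the same-$w$ and distinct-$w$ covariance comparisons are verified.

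Finally I would close the gap between $\mathbb{E}\|g\|,\mathbb{E}\|h\|$ and $\sqrt{n},\sqrt{p}$. The upper bound is immediate from Jensen, $\mathbb{E}\|g\|\le\sqrt{\mathbb{E}\|g\|^2}=\sqrt{n}$ and $\mathbb{E}\|h\|\le\sqrt{p}$, giving $\mathbb{E}\,s_{\max}\le\sqrt{p}+\sqrt{n}$. For $\mathbb{E}\|h\|-\mathbb{E}\|g\|\ge\sqrt{p}-\sqrt{n}$ I would exploit the monotonicity of $k\mapsto\sqrt{k}-\mathbb{E}\|\mathcal{N}(0,{\bf I}_k)\|$, which is decreasing, so that for $p\ge n$ the deficit $\sqrt{p}-\mathbb{E}\|h\|$ is no larger than $\sqrt{n}-\mathbb{E}\|g\|$; rearranging yields the claimed bound, and combining it with the concentration step completes the argument.

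I expect the main obstacle to be the Gordon comparison used for $\mathbb{E}\,s_{\min}$: unlike the one-sided Sudakov--Fernique step for $s_{\max}$, the min--max functional requires the full two-sided covariance hypotheses together with the variance-equalization device, and one must be careful that the distinct-$w$ and same-$w$ comparisons point in the correct directions. The monotonicity estimate for $\sqrt{k}-\mathbb{E}\|\mathcal{N}(0,{\bf I}_k)\|$, though elementary once the explicit Gamma-function expression for $\mathbb{E}\|\mathcal{N}(0,{\bf I}_k)\|$ is invoked, is the other place where the sharp additive constants $\pm\sqrt{n}$ genuinely originate, so it must be handled exactly rather than through a crude net-and-union-bound argument that would lose logarithmic factors.
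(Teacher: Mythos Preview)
The paper does not actually prove Lemma~\ref{lem0}; it is quoted from \cite{Davidson2001Local} and used as a black box inside the proof of Lemma~\ref{lem3}. Your outline is therefore not competing with an argument in the paper---you have supplied a proof where the authors were content to cite.

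That said, your sketch is correct and is essentially the Davidson--Szarek/Gordon argument found in the cited reference. The decomposition into (i) $1$-Lipschitz Gaussian concentration about the mean and (ii) comparison of $\mathbb{E}\,s_{\max}$, $\mathbb{E}\,s_{\min}$ against the decoupled process $g^{\rm T}w+h^{\rm T}z$ via Sudakov--Fernique and Gordon respectively is exactly how the sharp constants $\sqrt{p}\pm\sqrt{n}$ are obtained. Your remark that Gordon's hypothesis requires a variance-equalization device is on point: adjoining an independent $\xi\sim\mathcal{N}(0,1)$ to $X_{w,z}=w^{\rm T}{\bf A}z$ makes the variances match those of $Y_{w,z}$ without altering $\min_w\max_z$, after which the same-$w$ covariances coincide and the distinct-$w$ comparison reduces to $(1-w^{\rm T}w')(1-z^{\rm T}z')\ge 0$. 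The only step that genuinely needs care beyond this is the monotonicity of $k\mapsto\sqrt{k}-\mathbb{E}\chi_k$, which turns the Gordon output $\mathbb{E}\chi_p-\mathbb{E}\chi_n$ into $\sqrt{p}-\sqrt{n}$; the cleanest route is the recursion $(\mathbb{E}\chi_k)(\mathbb{E}\chi_{k+1})=k$ coming from $\Gamma((k+2)/2)=(k/2)\Gamma(k/2)$, combined with the two-sided estimate $\sqrt{k-1}\le\mathbb{E}\chi_k\le\sqrt{k}$. You flagged exactly this as the delicate point, which is accurate.
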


For given set $\mathcal{S}\subset\{1,\cdots,m\}$,
define ${\bf A}_{\mathcal{S}}$ as a matrix in $\mathbb{R}^{n\times p}$ whose columns are $\va_k, k\in\mathcal{S}$.
Let
\begin{align}\label{A-eq}
{\bm\Sigma}_{\mathcal{S}} := \frac{1}{p}{\bf A}_{\mathcal{S}}{\bf A}_{\mathcal{S}}^{\rm T}=\frac{1}{p}\sum_{k\in\mathcal{S}}\va_k\va_k^{\rm T},
\end{align}
and the largest eigenvalue of ${\bm\Sigma}_{\mathcal{S}}$ is calculated by
$$
\lambda_{\max}\left({\bm\Sigma}_{\mathcal{S}}\right) =\frac{1}{p}\left(s_{\max}\left({\bf A}_{\mathcal{S}}\right)\right)^2.
$$
Then according to Lemma \ref{lem0}, we have
\begin{align}\label{eq-temp2}
\mathbb{P}\left(\lambda_{\max}\left({\bm\Sigma}_{\mathcal{S}}\right)\leq \left(1+\frac{1}{\sqrt{p/n}}+\varepsilon_1\right)^2\right)\ge1-\myexp{-\frac{p\varepsilon_1^2}{2}}.
\end{align}
If $\mathcal{S}$ may be any set in $\{1,\cdots, m\}$ with cardinality $p$,
the probability in the RHS of \eqref{eq-temp2} needs to satisfy
$$
P \ge 1-C_{m}^{p}\myexp{-\frac{p\varepsilon_1^2}{2}}
$$
and we prove \eqref{lem-eq3}.

Following the same approach, we represent the smallest eigenvalue of ${\bm\Sigma}_{\mathcal{S}}$ as
$$
\lambda_{\min}\left({\bm\Sigma}_{\mathcal{S}}\right)=\frac{1}{p}\left(s_{\min}\left({\bf A}_{\mathcal{S}}\right)\right)^2
$$
and obtain for $p>n$
\begin{align}\label{eq-temp3}
\mathbb{P}\left(\lambda_{\min}\left({\bm\Sigma}_{\mathcal{S}}\right)\ge \left(1-\frac{1}{\sqrt{p/n}}-\varepsilon_2\right)^2\right)\ge1-\myexp{-\frac{p\varepsilon_2^2}{2}},
\end{align}
by using Lemma \ref{lem0}.
Inserting the combination number again to make it hold for arbitrary $\mathcal{S}$,
we arrive at \eqref{lem-eq2} and complete the proof.


\subsection{Calculate the upper bound $\beta_0$ in \eqref{Eve_pre_final}}
\label{appendix-beta0}

We introduce a Lemma first.

\begin{lem}\label{lem4}
Assume that $a_1,\cdots,a_m$ are independent standard Gaussian random variables and $\left\{a_{(i)}^2\right\}$ are the order statistics of $\{a_{i}^2\}$.
Then for $0<t\le 1$, we have
\begin{align}\label{lim1}
\lim_{m\rightarrow\infty}\frac{1}{tm}\sum_{i=1}^{tm} a_{(i)}^2 = 1 - \frac{1}{t}\frac{2\tau}{\sqrt{2\pi}}\myexp{-\frac{\tau^2}{2}},
\end{align}
where $\tau$ is determined by
$$
1-2Q(\tau) = t,
$$
and $Q\left(\cdot\right)$ is the tail probability of the standard Gaussian distribution.
\end{lem}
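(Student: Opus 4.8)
The plan is to prove Lemma \ref{lem4} by reducing the sum of order statistics to an empirical-average that converges by the law of large numbers, then identifying the limit with a Gaussian integral. First I would recall that the order statistics $a_{(1)}^2\le a_{(2)}^2\le\cdots\le a_{(m)}^2$ of the i.i.d.\ squared standard Gaussians are obtained by ranking the $a_i^2$ in increasing order, so that $\sum_{i=1}^{tm}a_{(i)}^2$ is the sum of the $tm$ \emph{smallest} values of $a_i^2$. The key observation is that $a_i^2\le c^2$ if and only if $|a_i|\le c$, so the event ``$a_i^2$ is among the $tm$ smallest'' is, up to boundary effects, the event $|a_i|\le \tau_m$, where $\tau_m$ is the empirical $t$-th ``central'' quantile, i.e.\ the threshold such that a fraction $t$ of the samples fall in $[-\tau_m,\tau_m]$. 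Since the $|a_i|$ have continuous distribution, this threshold is well defined a.s.\ for finite $m$, and by the Glivenko--Cantelli theorem $\tau_m\to\tau$ almost surely, where $\tau$ is the deterministic quantile solving $\mathbb{P}(|a|\le\tau)=1-2Q(\tau)=t$.

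The next step is to write
\begin{align*}
\frac{1}{tm}\sum_{i=1}^{tm}a_{(i)}^2=\frac{1}{tm}\sum_{i=1}^{m}a_i^2\,\mathbb{I}_{\{|a_i|\le \tau_m\}},
\end{align*}
valid up to at most one tie-breaking term which is negligible after dividing by $m$. I would then replace the random threshold $\tau_m$ by the deterministic $\tau$ at a cost that vanishes: the difference is controlled by $\frac{1}{m}\sum_i a_i^2\,\mathbb{I}_{\{\tau\wedge\tau_m<|a_i|\le \tau\vee\tau_m\}}$, which tends to $0$ a.s.\ because $\tau_m\to\tau$ and the integrand is a.s.\ bounded on the shrinking shell (dominated convergence, or a direct $\varepsilon$-sandwich using monotonicity of the truncated sum in the threshold). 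After this replacement the strong law of large numbers gives
\begin{align*}
\frac{1}{m}\sum_{i=1}^{m}a_i^2\,\mathbb{I}_{\{|a_i|\le\tau\}}\xrightarrow{\text{a.s.}}\mathbb{E}\!\left[a^2\,\mathbb{I}_{\{|a|\le\tau\}}\right],
\end{align*}
so that the desired limit is $\tfrac{1}{t}\,\mathbb{E}[a^2\mathbb{I}_{\{|a|\le\tau\}}]$.

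It then remains to evaluate the Gaussian integral $\mathbb{E}[a^2\mathbb{I}_{\{|a|\le\tau\}}]=\int_{-\tau}^{\tau}x^2\frac{1}{\sqrt{2\pi}}\myexp{-\frac{x^2}{2}}\,{\rm d}x$. Integrating by parts with $u=x$, ${\rm d}v=x\,\frac{1}{\sqrt{2\pi}}\myexp{-x^2/2}\,{\rm d}x$ gives $\int_{-\tau}^{\tau}x^2\frac{1}{\sqrt{2\pi}}\myexp{-x^2/2}\,{\rm d}x=\left(1-2Q(\tau)\right)-\frac{2\tau}{\sqrt{2\pi}}\myexp{-\frac{\tau^2}{2}}=t-\frac{2\tau}{\sqrt{2\pi}}\myexp{-\frac{\tau^2}{2}}$, where I used $\int_{-\tau}^{\tau}\frac{1}{\sqrt{2\pi}}\myexp{-x^2/2}\,{\rm d}x=1-2Q(\tau)=t$ and the boundary term $\bigl[-x\frac{1}{\sqrt{2\pi}}\myexp{-x^2/2}\bigr]_{-\tau}^{\tau}=-\frac{2\tau}{\sqrt{2\pi}}\myexp{-\tau^2/2}$. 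Dividing by $t$ yields exactly the right-hand side of \eqref{lim1}. The main obstacle is the first step: making rigorous the passage from the \emph{rank-based} description of the partial sum of order statistics to the \emph{threshold-based} indicator sum, and controlling the error from the random threshold $\tau_m$ versus the deterministic $\tau$; once that is handled, the rest is the law of large numbers plus an elementary integration by parts. (For the application in Appendix \ref{appendix-beta0}, where $\tau$ is itself defined through the implicit relation $1-2Q(\tau)=t$, one only needs the clean closed form above, not uniformity in $t$, so pointwise convergence for each fixed $t\in(0,1]$ suffices.)
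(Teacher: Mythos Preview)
Your proposal is correct and arrives at the same integral limit $\tfrac{1}{t}\int_{-\tau}^{\tau} x^2\phi(x)\,{\rm d}x$ as the paper, but by a somewhat different route. The paper discretizes the range of $a_i^2$ into a partition $0=x_0<\cdots<x_N=M$, applies the law of large numbers cell by cell to get $p_k^m\to p_k$, sandwiches the order-statistic average between step-function Riemann sums, and then sends $N\to\infty$ after $m\to\infty$ to recover the integral. You instead work with the empirical quantile $\tau_m$ directly, invoke Glivenko--Cantelli (sample-quantile convergence) for $\tau_m\to\tau$ almost surely, and reduce everything to a single application of the strong law on $a_i^2\,\mathbb{I}_{\{|a_i|\le\tau\}}$. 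Your argument is tighter and avoids the double limit; the paper's is more elementary in that it appeals to nothing beyond the LLN and the definition of the Riemann integral. Both reach the same closed form, and you supply the integration-by-parts computation that the paper leaves implicit.
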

\begin{proof}
The proof is postponed to Appendix \ref{proof-lem4}.
\end{proof}

On one hand, according to \eqref{cond-S},
we have
\begin{align}\label{eq-step4-right}
\frac{1}{\beta n}\sum_{k \in \mathcal{S}}y_k^2 \le &\frac{1}{\beta n}\sum_{k \in \mathcal{S}}\left(\va_k^{\rm T}\ve_{t-1}\right)^2\nonumber\\
\leq &\lambda_{\max}\left({\bm\Sigma}_{\mathcal{S}} \right)\|\ve_{t-1}\|^2,
\end{align}
where ${\bm \Sigma}_{\mathcal{S}}$ is defined in \eqref{def-Sigma}.
Let $n$ go to infinity with $\alpha$ fixed,
by applying Lemma \ref{lem3} on \eqref{eq-step4-right}, we have
\begin{equation}\label{eq-step4-combined1}
\frac{1}{\beta n}\sum_{k \in \mathcal{S}}y_k^2\le \left(1 + \frac{1}{\sqrt{\beta}} + \sqrt{2\ln \frac{{\rm e}\alpha}{\beta}}\right)^2\|\ve_{t-1}\|^2
\end{equation}
hold with probability $1$.

On the other hand, we write
\begin{align}
\frac{1}{\beta n}\sum_{k \in \mathcal{S}}y_k^2
= &\|{\bf x}^*\|^2\frac{1}{\beta n}\sum_{k \in \mathcal{S}}(a'_k)^2\label{eq-step4-left-pre}\\
\ge &\|{\bf x}^*\|^2\frac{1}{\beta n}\sum_{k=1}^{\beta n}\left(a'_{(k)}\right)^2,\label{eq-step4-left}
\end{align}
where \eqref{eq-step4-left-pre} comes from the fact that
$$
	y_k^2 = |\va_k^{\rm T}\vx^*|^2 = \|\vx^*\|^2 (a'_k)^2,
$$
and $a'_k$ is a standard Gaussian random variable determined only by $\va_k$,
and $\{a'_{\left(k\right)}\}$ is the order statistics of $\{a'_k\}$, $k=1,\cdots,m$.
When $n$ goes to infinity with $\alpha$ fixed, applying Lemma \ref{lem4} on \eqref{eq-step4-left}, we have
\begin{equation}\label{eq-step4-combined2}
\left(1 - \frac{\alpha}{\beta}\frac{2\tau}{\sqrt{2\pi}}\myexp{-\frac{\tau^2}{2}}\right)\|\vx^*\|^2\le \frac{1}{\beta n}\sum_{k \in \mathcal{S}}y_k^2,
\end{equation}
where
\begin{equation}\label{eq-def-tau}
1-2Q\left(\tau\right)=\frac{\beta}{\alpha}.
\end{equation}
Combining \eqref{eq-step4-combined1} and \eqref{eq-step4-combined2} together, we finally get
\begin{align}\label{eq-step4-combined}
\left(1 - \frac{\alpha}{\beta}\frac{2\tau}{\sqrt{2\pi}}\myexp{-\frac{\tau^2}{2}}\right)\|\vx^*\|^2\le
 \left(1 + \frac{1}{\sqrt{\beta}} + \sqrt{2\ln \frac{{\rm e}\alpha}{\beta}}\right)^2\|\ve_{t-1}\|^2.
\end{align}
The upper bound $\beta_0$ is given as the maximum of $\beta$ satisfying \eqref{eq-step4-combined}.
Considering that the LHS and the RHS of \eqref{eq-step4-combined} are, respectively, an increasing function and a decreasing function of $\beta$,
$\beta_0$ can be determined by solving the equation of \eqref{eq-step4-combined}.

\subsection{Proof of \eqref{probability}}
\label{proof-C1p}

In this subsection we will reshape $P(\beta_0)$ in \eqref{P_pre_final}, where $\beta$ is replaced by its upper bound of $\beta_0$, into \eqref{probability} and verify $C_1>0$.

For the multipliers of the exponential items in $P(\beta_0)$ in \eqref{P_pre_final},
we bound the binomial coefficients as
\begin{align*}
{\rm C}_{m}^{\beta_0 n}=&{\rm C}_{m}^{m-\beta_0 n}
\le\left(\frac{{\rm e}\alpha}{\beta_0}\right)^{\beta_0 n}.
\end{align*}
Then \eqref{P_pre_final} becomes
\begin{align}
P(\beta_0) \ge& 1-2m\myexp{-n\left(\frac{{\hat\varepsilon}_1^2}{4}-\frac{{\hat\varepsilon}_1^3}{6}\right)}
-\myexp{-\frac{\beta_0 n{\hat\varepsilon}_2^2}{2}}
-\myexp{-\frac{\left(\alpha -\beta_0 \right)n{\hat\varepsilon}_3^2}{2}}\nonumber\\
=&1-\myexp{-n\left(\frac{{\hat\varepsilon}_1^2}{4}-\frac{{\hat\varepsilon}_1^3}{6}-\frac{\ln(2m)}{n}\right)}
-\myexp{- \frac{n\beta_0{\hat\varepsilon}_2^2}{2}}
-\myexp{-\frac{n\left(\alpha -\beta_0 \right){\hat\varepsilon}_3^2}{2}}\label{P_temp1},
\end{align}
where ${\hat\varepsilon}_i, i=1,2,3$ are defined in \eqref{def-hat-e-1}, \eqref{def-hat-e-2}, and \eqref{def-hat-e-3}.
Let
\begin{align*}
C_{11}&:=\frac{{\hat\varepsilon}_1^2}{4}-\frac{{\hat\varepsilon}_1^3}{6}-\frac{\ln(6m)}{n},\\
C_{12}&:=\frac{\beta_0{\hat\varepsilon}_2^2}{2}-\frac{\ln3}{n},\\
C_{13}&:=\frac{\left(\alpha -\beta_0 \right){\hat\varepsilon}_3^2}{2}-\frac{\ln3}{n},
\end{align*}
then there exists $n_{C_1}$, when $n>n_{C_1}$, we can choose $0<\hat{\varepsilon}_1<1$, such that
\begin{align*}
C_{11}&>\frac{{\hat\varepsilon}_1^2}{8}-\frac{{\hat\varepsilon}_1^3}{12}=:\hat{C}_{11}>0,\\
C_{12}&>0,\\
C_{13}&>0.
\end{align*}
Let
\begin{align*}
C_1:=\min\left(\hat{C}_{11},C_{12},C_{13}\right) > 0,
\end{align*}
then the probability \label{P_temp1} becomes \eqref{probability}
and we complete the verification.

\subsection{Proof of Lemma \ref{lem4}}
\label{proof-lem4}

Without loss of generality, assume that $a_{(tm)}^2 < M$ when $m$ is large enough.
Denote $x_k = \frac{k}{N}M$, $k=0,\cdots,N$.
Then
$$
0 = x_0 < x_1 < \ldots < x_{N-1} < x_{N} = M,
$$
is a partition of the interval $[0, M]$.
For any $k = 1, \ldots, N$, the ratio of the number $\{a_i^2\}$ falling into the small interval $[x_{k-1}, x_k)$ to the total number $m$, denoted by $p_k^m$, can be expressed by
$$
p_k^{m} = \frac{1}{m}\sum_{i = 1}^m \mathbb{I}(a_i^2 \in [x_{k-1}, x_k)).
$$
Since $\mathbb{I}(a_i^2 \in [x_{k-1}, x_k)), i = 1, \ldots, m$ are \emph{i.i.d.} random variables,
according to the Law of Large Numbers, we have
\begin{align*}
\lim_{m \to \infty} p_k^{m} &= \mathbb{E} \mathbb{I}(a_i^2 \in [x_{k-1}, x_k)) \\
&= \mathbb{P}(a_i^2 \in [x_{k-1}, x_k)) =: p_k.
\end{align*}
Let $K$ satisfy
$$
\sum_{k = 1}^{K} p_k^m \le t \le \sum_{k = 1}^{K+1} p_k^m.
$$
Then we have
$$
\frac{1}{tm}\sum_{i=1}^{tm} a_{(i)}^2 \ge \frac{1}{t}\sum_{k = 1}^{K} x_{k-1}p_k^m,
$$
and
$$
\frac{1}{tm}\sum_{i=1}^{tm} a_{(i)}^2 \le \frac{1}{t}\sum_{k = 1}^{K+1} x_kp_k^m.
$$
Let $m$ go to infinity, then
\begin{align}\label{temp-lem4}
\frac{1}{t}\sum_{k = 1}^{K} x_{k-1}p_k \le \lim_{m\rightarrow\infty}\frac{1}{tm}\sum_{i=1}^{tm} a_{(i)}^2 \le \frac{1}{t}\sum_{k = 1}^{K+1} x_kp_k.
\end{align}
Let $N$ go to infinity, according to the definition of Riemann integral, both the upper and the lower bound in \eqref{temp-lem4} tends to
\begin{align}\label{temp2-lem4}
\frac{1}{t}\int_{-\tau}^{\tau}x^2\phi\left(x\right){\rm d}x,
\end{align}
where
$$
\int_{-\tau}^{\tau}\phi\left(x\right)=t
$$
and $\phi\left(x\right)$ is the probability density function of Gaussian variable.
By calculating this integral in \eqref{temp2-lem4}, one can find that it is exactly the limitation in the LHS of \eqref{lim1}, then we conclude the proof.


\bibliographystyle{IEEEtran}
\bibliography{IEEEabrv,mybibfile}
\vfill\pagebreak

\end{document}